\newtheorem{theorem}{Theorem}
\newtheorem{lemma}[theorem]{Lemma}
\newenvironment{proof}[1][Proof]{\noindent\textbf{#1.} }{\ \rule{0.5em}{0.5em}}
\begin{document}

\begin{center}
\bigskip \textbf{Efficient Construction of a Substitution Box Based on a Mordell
Elliptic Curve Over a Finite Field}\\[0pt]
\vspace{5mm} \textit{Naveed Ahmed Azam}$^{\ast }$,\textit{\ Umar Hayat%
\textnormal{,} Ikram Ullah} 

Department of Applied Mathematics and Physics, Graduate School of
Informatics, Kyoto University, Japan\\[0pt]
Email: $^{* }$azam@amp.i.kyoto-u.ac.jp

$^{* }$Corresponding author\\[0pt]
\end{center}

\textbf{Abstract}: Elliptic curve cryptography (ECC) is used in many security systems due to its small key size and high security as compared to the other cryptosystems. In many well-known security systems substitution box (S-box) is the only non-linear component. Recently, it is shown that the security of a cryptosystem can be improved by using dynamic S-boxes instead of a static S-box. This fact necessitates the construction of new secure S-boxes. In this paper, we propose an efficient method for the generation of S-boxes based on a class of Mordell elliptic curves (MECs) over prime fields by defining different total orders. The proposed scheme is developed in such a way that for each input it outputs an S-box in linear time and constant space. Due to this property, our method takes less time and space as compared to all existing S-box construction methods over elliptic curve. Furthermore, it is shown by the computational results that the proposed method is capable of generating cryptographically strong S-boxes with comparable security to some of the existing S-boxes constructed over different mathematical structures.

\textbf{Key words}: Mordell elliptic curve; Finite field; Substitution box; Total order; Computational complexity

\section{Introduction}

Cryptography deals with the techniques to secure the private data. In these
techniques, the data is transformed into an unreadable form by using some
keys so that the adversaries cannot extract any useful information. S-box is
the only non-linear component of many well-known cryptosystems including Advanced Encryption System (AES).
It is therefore the security of such cryptosystems solely depends on the
cryptographic properties of their S-boxes. Shannon \cite{Shan} proved that a
cryptosystem is secure, if it can create confusion and diffusion in the data
up to a certain level. An S-box is cryptographically strong enough to create
desire confusion and diffusion, if it satisfies certain tests including the
test of non-linearity, approximation,
strict avalanche, bit independence and algebraic complexity.
Nowadays, AES is considered to be the most secured and widely used
cryptosystem, and hence many cryptographers studied its S-box. The study in
\cite{Thomas,Courtois, Murphy, Rosenthal} reveals that the AES S-box is vulnerable against
algebraic attacks because of its sparse polynomial representation. It is
also noticed that a cryptosystem based on a single S-box is unable to
generate desirable security, if the data is highly correlated \cite{Azam1,Hussain}. Furthermore, it is shown that the security of a cryptosystem can be
improved by using dynamic S-boxes instead of a static S-box, see for example \cite{Rahnama, Katiyar, Manjula, Agarwal, Balajee, Kazys}.
The two main reasons behind this are:
(a)~static S-box is vulnerable to data analysis attack and subkey attacks in which subkeys are obtained by using  inverse subbyte, if inverse of the S-box is known~\cite{Rahnama}; and
(b)~it is shown in ~\cite{Katiyar, Manjula, Agarwal, Balajee, Kazys}  that the algorithms using dynamic S-box are more complex and provide more overhead to the cryptanalysts when compared with static S-box.
Different image encryption algorithms by using dynamic S-box are presented in~\cite{Zaibi, Wangg, Devaraj,YeLiu}. In these studies it turned out that the image cryptosystems based on a dynamic S-box provide better security when compared with the cryptosystems using a static S-box. %
Due to these reasons many researchers have proposed new S-box generation
techniques based on different mathematical structures including algebraic,
and differential equations.

For an S-box design technique, it is necessary
that the resultant S-box: (a) inherits the properties of the underlying
mathematical structure. This is an important requirement which leads to the efficient generation and better understanding of the cryptographic properties of the S-box;
(b) is generated in low time and space complexity; and
(c) satisfies the security tests.
Of course, an S-box generation technique with high time complexity is not suitable for the cryptosystems using multiple, and dynamic S-boxes. Lui et al., \cite{Liuu} presented an improved AES S-box based
on an algebraic method. Cui et al., \cite{Cuii} used an affine function to
generate an S-box with 253 non-zero terms in its polynomial representation. Tran et al., \cite{Tran} used composition of a Gray code instead of an affine
mapping with the AES S-box to generate an S-box with high algebraic
complexity. Khan and Azam \cite{Khan1} proposed different methods for the generation of cryptographically strong S-boxes based on a
generalization of Gray S-box, and affine functions, see \cite{Khan2}. Azam \cite{Azam1} used
the later S-boxes for the encryption of confidential images. Chaotic maps
including Baker, logistic, and Chebyshev maps are used to generate new
S-boxes in \cite{Guoping,Guo,Neural}. Similarly, elliptic curves (ECs) are also used in
the field of cryptography for the development of highly secure
cryptosystems. Miller \cite{Miller} presented an EC based security system which
has smaller key size and higher security as compared to RSA. Jung et al., \cite{Jung} developed a link between the points on hyper-elliptic curves and
non-linearity of an S-box. Hayat et al., \cite{Hayat, umar2} for the first time used EC
over a prime field for the generation of dynamic S-boxes.
In these schemes, an S-box is generated by using the $x$-coordinates of the points on an ordered EC over a prime $p$, where the ordering $\prec$ on the points is performed with respect to their values i.e., for any two points $(x_1,y_1)$ and $(x_2,y_2)$ on the EC, $(x_1,y_1) \preceq (x_2,y_2)$, if $(y_{1}^{2} \leq y_{2}^{2})$ (mod $p$).
Actually, the scheme in \cite{umar2} is a generalization of the scheme in \cite{Hayat}.
Although these methods are capable of generating cryptographically strong S-boxes, but they have the next two weak points.
Firstly, they need to compute and store the EC for their generation process.
Due to this, the time and space complexity of these schemes are $\mathcal{O}(p^{2})$ and $\mathcal{O}(p)$, respectively, where $p\geq257$ is the prime of the underlying EC.
Secondly, the output of these schemes is uncertain i.e., for each set of input parameters the algorithms do not necessarily output
an S-box.

The purpose of this article is to develop such a novel and efficient S-box
generation technique based on a finite Mordell elliptic curve (MEC) which generates secure S-box inheriting the
properties of the underlying MEC for each set of input parameters.
To achieve this, we defined some typical
type of total orders on the points of the MEC, and then used $y$%
-coordinates instead of $x$-coordinates to obtain an S-box. The remaining
paper is organized as follows:
Some basic definitions and results
related to EC are discussed in Section 2.
The proposed algorithm is described in
Section 3. Section 4 contains the security analysis, while a detailed comparison of the newly developed scheme with some of the existing methods is performed in Section 5.
Finally, conclusions are drawn in Section 6.
\section{Preliminaries}


For a prime $p$, and two non-negative integers $a,b\leq p-1$, the EC $E_{p,a,b}$ over the prime field $\mathbf{F}%
_{p}$ is defined to be the collection of infinity point $O$, and all ordered
pairs $(x,y)\in \mathbf{F}_{p}\times \mathbf{F}_{p}$ satisfying the equation
\begin{equation*}
y^{2}\equiv x^{3}+ax+b\ (\text{mod}\hspace{1mm} p).
\end{equation*}%
We call $p,a$, and $b$ the parameters of the EC $E_{p,a,b}$. An approximation for
the  number of points $\#E_{p,a,b}$ on $E_{p,a,b}$ can be obtained by
using Hasse's formula \cite{WAS,Brown}
\begin{equation*}
\left\vert \#E_{p,a,b}-p-1\right\vert \leq 2\sqrt{p}.
\end{equation*}

Mordell elliptic curve (MEC) is a special kind of elliptic curve with $a=0$%
. The significance of some MECs $E_{p,0,b}$ is that they have exactly $%
p+1$ points.
The following Theorem \cite{WAS}  gives the information of such MECs.
\begin{theorem}
\label{mrdell} Let $p>3$ be a prime such that $p\equiv 2$ (mod 3). Then
for each $b\in \mathbf{F}_{p}$, the MEC $E_{p,0,b}$ has exactly $p+1$
distinct points, and has each integer in $[0,p-1]$ exactly once as a $y$%
-coordinate.
\end{theorem}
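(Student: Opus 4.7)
The plan is to exploit the fact that the hypothesis $p\equiv 2\pmod{3}$ makes the cube map a bijection on $\mathbf{F}_p$; the whole statement then follows because, for every fixed $y$, the defining equation becomes $x^3\equiv y^2-b\pmod{p}$, which has a unique solution in $x$.

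First I would verify the key number-theoretic lemma: if $p\equiv 2\pmod 3$, then the map $\phi\colon\mathbf{F}_p\to\mathbf{F}_p$ defined by $\phi(x)=x^3$ is a bijection. The hypothesis gives $p-1\equiv 1\pmod 3$, so $\gcd(3,p-1)=1$. On the cyclic multiplicative group $\mathbf{F}_p^{*}$ of order $p-1$, the cubing homomorphism has kernel $\{x:x^3=1\}$, whose size is $\gcd(3,p-1)=1$. Hence $\phi$ is injective on $\mathbf{F}_p^{*}$, and since $\phi(0)=0$ while $\phi$ preserves $\mathbf{F}_p^{*}$, $\phi$ is a bijection on all of $\mathbf{F}_p$. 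This is the one nontrivial ingredient, and the whole theorem hinges on it; everything afterwards is a bookkeeping step.

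Next I would use this bijection to count affine points of $E_{p,0,b}$. Fix any $y\in\{0,1,\ldots,p-1\}$ and consider
\begin{equation*}
x^{3}\equiv y^{2}-b\pmod{p}.
\end{equation*}
Since $\phi$ is a bijection, there is a unique $x\in\mathbf{F}_p$ satisfying this congruence. Hence each $y\in[0,p-1]$ is the $y$-coordinate of exactly one affine point on $E_{p,0,b}$, which simultaneously proves the second claim of the theorem and shows that the number of affine points equals $p$. Adding the point at infinity $O$ gives exactly $p+1$ distinct points, establishing the first claim.

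The only potential obstacle is the cube-bijection step, and even there the argument is short once the congruence $p-1\equiv 1\pmod 3$ is spotted; no appeal to Hasse's bound or to deeper properties of Mordell curves is needed. I would finish by remarking that the two assertions of the theorem are really the same fact phrased two ways: bijectivity of $x\mapsto x^3$ partitions the $p$ affine points according to their (pairwise distinct) $y$-coordinates.
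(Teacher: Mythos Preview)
Your argument is correct: the crucial observation is that $p\equiv 2\pmod 3$ forces $\gcd(3,p-1)=1$, so $x\mapsto x^{3}$ is a bijection on $\mathbf{F}_{p}$, and hence for every $y\in[0,p-1]$ the equation $x^{3}\equiv y^{2}-b\pmod{p}$ has a unique solution. Both assertions of the theorem follow immediately.

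As for comparison with the paper: the paper does not actually give a proof of this theorem. It is stated there as a known fact and attributed to Washington's textbook \cite{WAS}; the authors then use it as a black box to guarantee that their $y$-coordinate extraction yields a bijection. Your write-up supplies exactly the elementary proof one finds in that reference, so in effect you have filled in what the paper deliberately omitted. There is nothing to critique in the mathematics; if anything, you could trim the closing commentary about ``potential obstacles'' and the remark that the two assertions are the same fact, since by that point the proof is already complete.
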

Henceforth, a MEC $E_{p,0,b}$, where $p\equiv 2$ (mod~3), is simply denoted by $E_{p\equiv 2,b}$.
\section{Description of the Proposed S-box Designing Technique}

In this section, we give an informal intuition of our proposed method. Our
aim is to develop such an S-box generation technique
based on a MEC which outputs an S-box: (a) in linear time and constant space for each set of input parameters; (b) that
inherits the properties of the underlying MEC; and (c) having high security against
cryptanalysis. Note that the S-box design techniques proposed by
\cite{Hayat, umar2} do not satisfy conditions (a) and (b).
One of the possible ways of
designing such a technique is to input such an EC which contains all integer
values from $[0, 255]$ without repetition.
It is, therefore, the proposed algorithm takes a MEC $E_{p\equiv 2,b}$ as an input, and uses $y$%
-coordinates to generate an S-box instead of $x$-coordinates.
The next task is to use these $y$%
-coordinates in such a way that the resultant S-box inherits the properties
of the underlying MEC. Of course, the usage of some arithmetic operations such
as modulo operation for this purpose S-box will destroy the
structure of the underlying MEC. Thus, we used the concept of total
order on the MEC to get an S-box. Order theory is intensively used in formal
methods, programming languages, logic, and statistic analysis. Now the
natural question is how to define different orderings on the MEC. Note
that for each $x$ value of MEC, there are two $y$ values. Thus, we can
divide the orderings on MEC into two categories: (1) one is that in which
the two $y$ values of each $x$ appear consecutively; and (2) the other one
contains those orderings in which the two $y $ values of each $x$ do not
appear consecutively. Based on this fact, we defined three different type of
orderings on a given MEC $E_{p\equiv 2,b}$ to generate three different S-boxes.
\subsection{The proposed orderings on a MEC $E_{p\equiv 2,b}$}

The orderings used in the proposed method are discussed below.

\textbf{(1) A natural ordering on a MEC: }We define a natural ordering $\prec
_{N}$ on $E_{p\equiv 2,b}$ based on $x$-coordinates as follows
\begin{equation}
(x_{1},y_{1})\prec _{N}(x_{2},y_{2})\Leftrightarrow \left\{
\begin{tabular}{l}
either if $x_{1}<x_{2};$ or \\
if $x_{1}=x_{2},$ and $y_{1}<y_{2}$,%
\end{tabular}%
\right.
\end{equation}%
where $(x_{1},y_{1}), (x_{2},y_{2}) \in E_{p\equiv 2,b}$.

The aim of this ordering is to sort the points
on the MEC in such a way that the $x$-coordinates are in non-decreasing order,
and the two $y$ values corresponding to each $x$ appear consecutively.

The next two orderings are introduced based on the following observation deduced from Theorem \ref{mrdell} to diffuse the $y$-coordinates on a MEC.

\bigskip
\textbf{Observation:} \label{x1notx2}For any two distinct points $(x_{1},y_{1})$ and $(x_{2},y_{2})
$ on the MEC $E_{p\equiv 2,b}$, and either $%
x_{1}+y_{1}=x_{2}+y_{2}$ or $x_{1}+y_{1}\equiv x_{2}+y_{2}(\text{mod}\hspace{1mm}p)$, it
holds that $x_{1}\neq x_{2}$.

\bigskip

\textbf{(2) A diffusion ordering on a MEC: }An ordering is defined on $E_{p\equiv 2,b}$ to diffuse the two $y$ values of each $x$. Let $(x_{1},y_{1})$
and $(x_{2},y_{2})$ be any two points on $E_{p\equiv 2,b}$, the diffusion
ordering $\prec _{D}$ is defined to be
\begin{equation}
(x_{1},y_{1})\prec _{D}(x_{2},y_{2})\Leftrightarrow \left\{
\begin{tabular}{l}
either if $x_{1}+y_{1}<x_{2}+y_{2};$ or \\
if $x_{1}+y_{1}=x_{2}+y_{2}$, and $x_{1}<x_{2}.$%
\end{tabular}%
\right.
\end{equation}

\begin{lemma}
\label{DO} The relation $\prec _{D}$ is a total order on the MEC $E_{p\equiv 2,b}$.
\end{lemma}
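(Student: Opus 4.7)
The plan is to verify that $\prec_D$ satisfies the three defining properties of a strict total order on $E_{p\equiv 2,b}$: irreflexivity, transitivity, and trichotomy. The underlying intuition is that $\prec_D$ is the lexicographic order induced by the map $\phi:(x,y)\mapsto(x+y,\,x)$ from $E_{p\equiv 2,b}$ to $\mathbb{Z}\times\mathbb{Z}$, where $x,y$ are the integer representatives in $[0,p-1]$, and lexicographic orders pulled back by an injection are strict total orders.

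First I would dispatch irreflexivity: for any point $(x,y)$, neither $x+y<x+y$ nor the second disjunct (requiring $x<x$) can hold, so $(x,y)\not\prec_D(x,y)$. Then I would handle transitivity by case analysis. Assuming $(x_1,y_1)\prec_D(x_2,y_2)\prec_D(x_3,y_3)$, I split into the four cases determined by which clause of the definition realizes each of the two comparisons: (a) both strict on the sum; (b) first strict, second tie-broken; (c) first tie-broken, second strict; (d) both tie-broken. In cases (a)--(c) the chain of integer inequalities on $x_i+y_i$ collapses to $x_1+y_1<x_3+y_3$, giving $(x_1,y_1)\prec_D(x_3,y_3)$ from the first clause. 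In case (d) we get $x_1+y_1=x_3+y_3$ together with $x_1<x_2<x_3$, so the second clause applies.

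The only step needing actual content is trichotomy. Let $P_1=(x_1,y_1)$ and $P_2=(x_2,y_2)$ be distinct points of $E_{p\equiv 2,b}$. If $x_1+y_1\neq x_2+y_2$, then totality of $<$ on $\mathbb{Z}$ immediately gives $P_1\prec_D P_2$ or $P_2\prec_D P_1$. The interesting sub-case is $x_1+y_1=x_2+y_2$; here I would invoke the Observation preceding the lemma, which guarantees $x_1\neq x_2$ for distinct MEC points sharing this sum. (This is in fact forced by Theorem \ref{mrdell}: each $y$-value occurs exactly once as a $y$-coordinate on $E_{p\equiv 2,b}$, so $x_1=x_2$ together with $x_1+y_1=x_2+y_2$ would imply $y_1=y_2$, contradicting $P_1\neq P_2$.) With $x_1\neq x_2$, one of $x_1<x_2$ or $x_2<x_1$ holds, and the second clause of the definition delivers comparability.

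I do not expect any genuine obstacle: the verification is routine once one recognizes $\prec_D$ as a lexicographic order, and the only MEC-specific input is the trivial injectivity of $\phi$ which is exactly what the Observation records. If anything, the mild subtlety is being careful that the equalities $x+y=x'+y'$ are taken as integer equalities in $[0,2(p-1)]$ (not modulo $p$), which matches the wording of the definition and is what makes the reduction to a lexicographic order on $\mathbb{Z}^{2}$ valid.
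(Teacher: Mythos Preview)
Your argument is correct, and at the level of the underlying case analysis it parallels the paper's proof; both boil down to checking that the lexicographic rule on $(x+y,x)$ behaves as an order. The packaging differs, however. The paper treats $\prec_D$ as a \emph{non-strict} order and verifies reflexivity, antisymmetry, and transitivity, whereas you treat it as a \emph{strict} order and verify irreflexivity, transitivity, and trichotomy. Your reading is the one that actually matches the definition as written (both clauses use strict $<$, so $(x,y)\prec_D(x,y)$ fails), and your trichotomy step explicitly establishes comparability of distinct points, a step the paper does not spell out. Your conceptual framing via the injection $\phi(x,y)=(x+y,x)$ into $\mathbb{Z}^2$ with its lexicographic order is a cleaner way to see why everything works. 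One minor remark: in the trichotomy sub-case $x_1+y_1=x_2+y_2$, invoking the Observation is more than you need---plain arithmetic already gives that $x_1=x_2$ would force $y_1=y_2$ and hence $P_1=P_2$---but this is harmless.
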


\begin{proof}
For each $(x_{1},y_{1})\in E_{p\equiv 2,b}$, we have $x_{1}+y_{1}=x_{1}+y_{1}$,
and therefore $(x_{1},y_{1})\prec _{D}(x_{1},y_{1})$. This implies that $%
\prec _{D}$ is reflexive. Next, we need to show that $\prec _{D}$ satisfies
the antisymmetric property. Thus, for $(x_{1},y_{1}),(x_{2},y_{2})\in
E_{p\equiv 2,b}$, suppose that $(x_{1},y_{1})\prec _{D}(x_{2},y_{2}),$ and $%
(x_{2},y_{2})\prec _{D}(x_{1},y_{1})$ hold. This implies that $%
x_{1}+y_{1}=x_{2}+y_{2}$. This is because of the fact that $%
x_{1}+y_{1}<x_{2}+y_{2}$, and $x_{2}+y_{2}<x_{1}+y_{1}$ are the only cases
for which the supposition and $x_{1}+y_{1}\neq x_{2}+y_{2}$ are true, which
eventually imply that $x_{1}+y_{1}=x_{2}+y_{2}.$ Now if $x_{1}\neq x_{2},$ then by
the supposition and the fact $x_{1}+y_{1}=x_{2}+y_{2}$, we have $x_{1}<x_{2}$ and $x_{2}<x_{1}$, which lead to the contradiction $x_{1}=x_{2}$.
Thus $x_{1}+y_{1}=x_{2}+y_{2}$ and $x_{1}=x_{2}$ hold, which ultimately
imply that $y_{1}=y_{2}$, and therefore $(x_{1},y_{1})=(x_{2},y_{2})$. Now,
to prove the transitivity property, suppose that $(x_{1},y_{1})\prec
_{D}(x_{2},y_{2})$, and $(x_{2},y_{2})\prec _{D}(x_{3},y_{3})$ hold, where $%
(x_{1},y_{1}),(x_{2},y_{2}),(x_{3},y_{3})\in E_{p\equiv 2,b}.$ Now if $%
x_{1}+y_{1}<x_{2}+y_{2}$ and $x_{2}+y_{2}\leq x_{3}+y_{3}$, or $%
x_{1}+y_{1}=x_{2}+y_{2}$ and $x_{2}+y_{2}<x_{3}+y_{3}$, then $%
x_{1}+y_{1}<x_{3}+y_{3}$, and therefore $(x_{1},y_{1})\prec
_{D}(x_{3},y_{3}).$ Similarly, if $x_{1}+y_{1}=x_{2}+y_{2}=x_{3}+y_{3}$, then $x_{1}<x_{2}$ and $x_{2}<x_{3}$, and hence $x_{1}+y_{1}=x_{3}+y_{3}$ and $%
x_{1}<x_{3}.$ This completes the proof.
\end{proof}
\bigskip
\\
\textbf{(3) A modulo diffusion ordering on a MEC: }The order $\prec _{M}$
defined below produces diffusion in both $x$-coordinates and $y$-coordinates
of the points on $E_{p\equiv 2,b}$. Let $(x_{1},y_{1}),(x_{2},y_{2})\in
E_{p\equiv 2,b}$, then
\begin{equation}
(x_{1},y_{1})\prec _{M}(x_{2},y_{2})\Leftrightarrow \left\{
\begin{tabular}{l}
either if $(x_{1}+y_{1}<x_{2}+y_{2})$(mod $p);$ or \\
if $x_{1}+y_{1}\equiv x_{2}+y_{2}$(mod $p)$, and $x_{1}<x_{2}.$%
\end{tabular}%
\right.
\end{equation}

\begin{lemma}
\label{mod}The relation $\prec _{M}$ is a total order on the MEC $E_{p\equiv 2,b}$.
\end{lemma}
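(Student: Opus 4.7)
The plan is to mirror the three-part structure used in the proof of Lemma~\ref{DO}, verifying reflexivity, antisymmetry, and transitivity of $\prec_M$, while being careful that all ``$<$'' comparisons of sums $x_i + y_i$ are interpreted after reducing modulo $p$ into the canonical range $[0, p-1]$, so that the strict inequality is a well-defined integer comparison. Reflexivity is immediate from $x_1 + y_1 \equiv x_1 + y_1 \pmod p$, invoking the second clause of the definition exactly as in the proof of Lemma~\ref{DO}.

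For antisymmetry, I would assume $(x_1, y_1) \prec_M (x_2, y_2)$ and $(x_2, y_2) \prec_M (x_1, y_1)$ and observe that the first clause (strict inequality of sums mod $p$) cannot be the active one in both directions, since two fixed integers in $[0, p-1]$ cannot each be strictly less than the other. Hence both pairs must fall into the second clause, giving $x_1 + y_1 \equiv x_2 + y_2 \pmod p$ together with $x_1 < x_2$ and $x_2 < x_1$; these $x$-inequalities force $x_1 = x_2$, at which point the Observation above --- stating that distinct points of $E_{p\equiv 2, b}$ with $x_1 + y_1 \equiv x_2 + y_2 \pmod p$ must have distinct $x$-coordinates --- closes the argument by contrapositive, yielding $(x_1, y_1) = (x_2, y_2)$.

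For transitivity, given $(x_1, y_1) \prec_M (x_2, y_2)$ and $(x_2, y_2) \prec_M (x_3, y_3)$, I would set $s_i := (x_i + y_i) \bmod p$ and split into four cases according to which of the two defining clauses is active in each hypothesis. In three of the four cases a strict inequality among the $s_i$ propagates to give $s_1 < s_3$, hence $(x_1, y_1) \prec_M (x_3, y_3)$; in the remaining case all three sums agree mod $p$ and the tie-breaker chain $x_1 < x_2 < x_3$ gives $x_1 < x_3$, again yielding the desired relation.

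The step I expect to require the most care is antisymmetry, specifically the appeal to the Observation. Without it, knowing only $x_1 = x_2$ and $x_1 + y_1 \equiv x_2 + y_2 \pmod p$ would yield $y_1 \equiv y_2 \pmod p$ but not genuine equality of $y_1$ and $y_2$ as representatives in $[0, p-1]$. Theorem~\ref{mrdell} --- which ensures that under $p \equiv 2 \pmod 3$ every integer in $[0, p-1]$ appears exactly once as a $y$-coordinate on $E_{p\equiv 2, b}$ --- is what ultimately licenses the Observation and hence this concluding identification.
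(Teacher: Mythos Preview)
Your proposal is correct and matches the paper's approach exactly: the paper itself gives no separate argument for Lemma~\ref{mod} but simply states that it ``can be proved by using the similar arguments as used in the proof of Lemma~\ref{DO}'', and your three-part verification of reflexivity, antisymmetry, and transitivity (with sums read modulo~$p$) is precisely that elaboration. One small remark: the worry in your final paragraph is unnecessary, since the coordinates $y_1, y_2$ already lie in $[0, p-1]$ by the definition of $E_{p\equiv 2, b} \subseteq \mathbf{F}_p \times \mathbf{F}_p$, so $y_1 \equiv y_2 \pmod p$ immediately yields $y_1 = y_2$ without any appeal to the Observation or to Theorem~\ref{mrdell}.
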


Lemma \ref{mod} can be proved by using the similar arguments as used in the
proof of Lemma \ref{DO}.

The effect of these orderings $\prec _{N},\prec _{D}$ and $\prec _{M}$ on $y$%
-coordinates of the MEC $E_{101\equiv 2,1}$ is shown in Figure \ref{fig:D} by plotting them in a non-decreasing order of their points on the MEC w.r.t $\prec _{N},\prec _{D}$ and $\prec _{M}$, respectively.
\begin{figure}[H]
\centering
\includegraphics [scale=0.90]{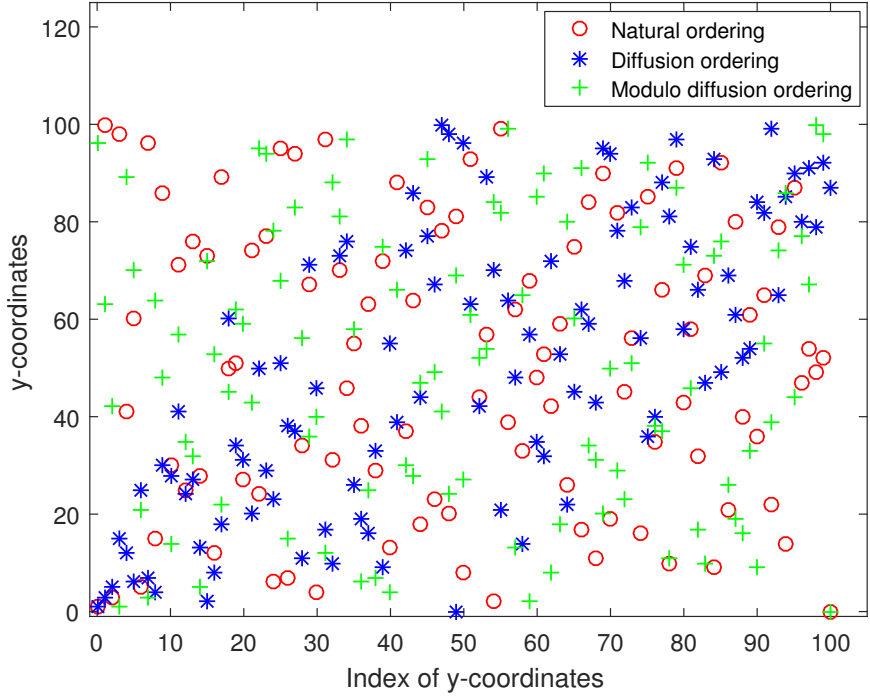}
\caption{The arrangements of $y$-coordinates of $E_{101\equiv 2,1}$ w.r.t. the
proposed orderings }
\label{fig:D}
\end{figure}
Similarly, a relation among the sets of all $y$-coordinates of the MEC $E_{p\equiv 2,b}$ obtained by different proposed orderings $\prec _{H}$ and $\prec _{K}$, where $H,K\in \{N,D,M\}$, is quantified by computing their correlation coefficient $\rho_{HK}$. The correlation results for different MECs are shown in Table \ref{Corr}. It is evident from the results that each ordering has different effect on the $y$-coordinates of the underlying MEC.

%
\begin{table}[H]
\caption{ Results of the correlation test}
\label{Corr}
\centering
\begin{tabular}{ccccc}\\[-3mm]
\hline
$p$    & $b$   & $\rho_{ND}$ & $\rho_{ND}$ & $\rho_{DM}$ \\
\hline
101  & 1   & -0.0588             & 0.0550              & -0.0497             \\
827  & 87  & -0.0044             & 0.0008              & 0.0027              \\
1013 & 118 & 0.0028               & -0.0059             & 0.0003              \\
2027 & 8   & 0.0007              & -0.0068             & -0.0002    \\
\hline
\end{tabular}
\end{table}
\subsection{The proposed S-box construction method}
Let $E_{p\equiv 2,b}$ be a
Mordell elliptic curve (MEC), where $p\geq 257$. The lower bound on the prime $p$ is $257$ for
the proposed method so that MEC has at least $256$ points. An S-box ${\small %
S}_{p,b}^{H}$, where $H\in \{N,D,M\}$, is generated by selecting the $y$%
-coordinates on $E_{p\equiv 2,b}$ which are in the interval $[0, 255]$ as
${\small S}_{p,b}^{H}\colon \{0,1,\ldots ,255\}\rightarrow \{0,1,\ldots ,255\} \text{ defined as } {\small S}_{p,b}^{H}(i)=y_{i}$,
such that $(x_{i},y_{i})\in E_{p\equiv 2,b}$, and \\ $(x_{i-1},y_{i-1}) \prec
_{H}(x_{i},y_{i})$.

It is clear from Theorem \ref{mrdell} that ${\small %
S}_{p,b}^{H}$ is a bijection, which further implies that the proposed method generates an S-box for each set of input parameters.
%
%
\begin{lemma}\label{complexity}
For any prime $p\geq 257$ such that $p\equiv 2$~(mod 3), integer $b\in
\lbrack 0,p-1]$, and $H\in \{N,D,M\},$ the S-box $S_{p,b}^{H}$ can be
generated in time complexity $\mathcal{O}(p)$ and constant space.
\end{lemma}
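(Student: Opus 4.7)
The plan is to exhibit, for each ordering $\prec_{H}$, an explicit algorithm that scans $x$ from $0$ to $p-1$ exactly once and, on the fly, decides whether the curve equation admits a solution with $y$-coordinate in $[0,255]$, then places each such $y$-coordinate into a fixed-length output array of size $256$ representing $S_{p,b}^{H}$. Correctness will follow from Theorem \ref{mrdell}: on $E_{p\equiv 2,b}$ every integer in $[0,p-1]$ occurs exactly once as a $y$-coordinate, so the scan over $x$ encounters each of the $256$ relevant pairs $(x,y)$ with $y\leq 255$ exactly once and outputs a bijection $\{0,\ldots,255\}\to\{0,\ldots,255\}$.

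For each $x\in\{0,1,\ldots,p-1\}$ the per-iteration work is to be bounded by a constant independent of $p$: compute $t=x^{3}+b\pmod p$ with three modular multiplications and an addition, test whether $t$ is a quadratic residue (by Euler's criterion or Tonelli--Shanks, counted as a field-operation of constant cost), and, if so, extract the two square roots $y_{1}<y_{2}$. If $H=N$, I simply append any $y_{i}\leq 255$ to the tail of the output array: because the outer loop visits $x$ in increasing order and the local pair $(y_{1},y_{2})$ is already sorted, this produces exactly the $\prec_{N}$ order. If $H\in\{D,M\}$, I locate the correct insertion slot for $y_{i}$ among the at most $256$ entries already stored (using the sort key $x_{i}+y_{i}$ or $(x_{i}+y_{i})\bmod p$) and shift the suffix; both the search and the shift cost $O(256)=O(1)$ per insertion. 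Summing over the $p$ outer iterations gives total time $\mathcal{O}(p)$, while the only non-output storage is the loop counter together with a handful of scalar variables, so the auxiliary space is $\mathcal{O}(1)$.

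The main obstacle I expect to address is making precise why, for $\prec_{D}$ and $\prec_{M}$, each insertion really is constant-time rather than growing with $p$. The key invariant is that the partially built S-box never holds more than $256$ entries throughout the execution, because Theorem \ref{mrdell} guarantees each admissible $y$ appears only once as a $y$-coordinate and we discard the values $y>255$. With that invariant in hand, every key comparison and every suffix shift is bounded by the absolute constant $256$, which decouples the cost of maintaining the $\prec_{H}$-sorted order from $p$ and closes the $\mathcal{O}(p)$-time, $\mathcal{O}(1)$-space budget claimed in the lemma.
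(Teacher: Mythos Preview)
Your overall strategy is reasonable and your handling of the sorting step (insertion among at most $256$ entries, hence $O(1)$ per insertion) is fine, but there is a gap in the per-iteration cost of the $x$-scan. You assert that testing quadratic residuosity via Euler's criterion, and extracting a square root via Tonelli--Shanks, may be ``counted as a field-operation of constant cost.'' In the arithmetic model that makes the $\mathcal{O}(p)$ bound meaningful---one modular addition, multiplication, or comparison per unit time---neither of these is constant: Euler's criterion is a modular exponentiation with exponent $(p-1)/2$, costing $\Theta(\log p)$ multiplications, and Tonelli--Shanks is at least as expensive. Your loop therefore runs in $\mathcal{O}(p\log p)$, not $\mathcal{O}(p)$, and the claimed bound is not established.

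The paper sidesteps this by the opposite decomposition: it fixes each $y\in[0,255]$ and brute-forces the inner loop over $x\in[0,p-1]$, testing $x^{3}+b\equiv y^{2}\pmod{p}$ directly. Each such test is a constant number of multiplications and one comparison, so $256\cdot p=\mathcal{O}(p)$ tests suffice, after which the $256$ points are sorted in constant time for any $H$. Your $x$-first scan can be repaired within the same model by precomputing the $256$ values $y^{2}\bmod p$ for $y\in[0,255]$ into a constant-size lookup table and replacing the QR/square-root step with a table lookup of $x^{3}+b\bmod p$; that makes each of your $p$ iterations genuinely $O(1)$. As written, however, the reliance on Euler's criterion or Tonelli--Shanks does not meet the stated time budget.
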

\begin{proof}
The generation of $S_{p,b}^{H}$ requires calculation of $256$ points on the MEC with $y$-coordinates in $[0,255]$, and then their sorting.
The calculation of $256$ points on the MEC can be done in $\mathcal{O}(p)$, since for each $y\in [0, 255]$, a for loop of size $p$ is required to find integer $x$ such that $(x, y)$ is a point on the MEC.
However, the sorting of these $256$ points can be done in a constant time with respect to the ordering $H$.
Thus, $S_{p,b}^{H}$ can be generated in $\mathcal{O}(p)$ time.
Furthermore, the generation process store only 256 points on the MEC for sorting purpose, and therefore it takes constant space.
\end{proof}

It is evident from Lemma~\ref{complexity} that the time and space complexity of the proposed S-box generation method is independent of the parameter $b$ and the ordering on the underlying MEC.
An algorithmic description of the proposed generation method is given in Algorithm~1.

\begin{algorithm}[H]
\caption{{The proposed S-box generation method}}
\begin{algorithmic}[1]
\Require  A Mordell elliptic curve $E_{p ,b}$, where $p \equiv 2$ (mod 3), with a total order $H\in \{N,D,M\}$.
\Ensure The proposed S-box $S_{p,b}^{H}$.
\State{$A := \emptyset$; /* The set of 256 points of the MEC with $y$-coordinates in [0, 255]*/}
\For{each $y = 0, 1, \ldots, 255 $}
	\While{$x \in [0, p-1] $}
		\If{$x^{3} + b \equiv y^{2}$ (mod $p$)}
			\State{$A := A \cup \{(x, y)\}$}
		\EndIf
	\EndWhile
\EndFor
\State{Sort $A$ with respect to the ordering $H$;}
\State{{Output} all $y$-coordinates of the points in $A$ preserving their order as the S-box $S_{p,b}^{H}$.}

\end{algorithmic}
\end{algorithm}

The S-boxes ${\small S}%
_{1667,351}^{N},{\small S}_{3299,1451}^{D}$ and ${\small S}_{4229,2422}^{M}$
generated by the proposed technique are presented in Tables (\ref{SN})-(\ref%
{SM}), respectively.

\section{Security Analysis}

Several standard tests are applied on the S-boxes obtained by the proposed
method to test their cryptographic strength. A brief introduction to these security tests, and their results for some of the newly generated
S-boxes ${\small S}_{1667,351}^{N}$, ${\small S}_{1949,544}^{N}$, ${\small S}%
_{3023,626}^{N},$ ${\small S}_{3299,1451}^{D}$, ${\small S}_{3041,1298}^{D}$%
, ${\small S}_{3347,2937}^{D}$, ${\small S}_{4229,2422}^{M}$, ${S}%
_{4217,1156}^{M}$ and ${\small S}_{3299,1400}^{M}$ are discussed in this
section.


\subsection{Non-Linearity (NL)}

It is important for an S-box to create confusion in the data up to a certain
level to keep the data secure from the adversaries. The {\normalsize %
confusion creation capability of an S-box }$S$ over the Galois Field $%
GF(2^{8})$ {\normalsize is measured by its non-linearity }$\mathcal{N}(S)$,
which is defined below
\begin{equation*}
\mathcal{N}(S)=\min_{\alpha ,\beta ,\gamma }\{x\in GF(2^{8}):\alpha \cdot
S(x)\neq \beta \cdot x\oplus \gamma \},
\end{equation*}%
where $\alpha \in GF(2^{8})${\normalsize , }$\gamma \in GF(2)$%
{\normalsize , }$\beta \in GF(2^{8})\backslash \{0\}${\normalsize \ and
\textquotedblleft }$\cdot ${\normalsize \textquotedblright\ represents dot
product over }$GF(2).$

An S-box with high NL is capable of
generating high confusion in the data. However, it is also shown in \cite{Willi} that an S-box with high NL may not satisfy other cryptographic
properties. The NL of some of the newly constructed S-boxes is listed in
Table \ref{NL}. Note that each listed S-box has NL 106, which is large
enough to create high confusion.
\begin{table}[H]
\caption{Non-linearity of the newly generated S-boxes}
\label{NL}{\small \centering
\scalebox{.70} {\begin{tabular}{llllllllllllllll}
\hline \\[-3mm]
S-boxes	&	$S_{1667,351}^{N}$	&	$S_{1949,544}^{N}$	&	$S_{3023,626}^{N}$	&	$S_{3299,1451}^{D}$	&	$S_{3041,1298}^{D}$	&	$S_{3347,2937}^{D}$	&	$S_{4229,2422}^{M}$	&	$S_{4217,1156}^{M}$	&	$S_{3299,1400}^{M}$	\\
\hline																			
NL	&	106	&	106	&	106	&	106	&	106	&	106	&	106	&	106	&	106	\\
 \hline
\end{tabular}}  }
\end{table}

\subsection{\protect\normalsize Approximation Attacks}

{\normalsize A cryptographically strong S-box must have high resistance
against approximation attacks. The approximation attacks can be divided into
two categories namely linear approximation attacks, and differential
approximation attacks which are explained below.}

The resistance of an S-box $S$ against linear approximation attacks is
measured by calculating its maximum number $\mathcal{L(}S\mathcal{)}$ of
coincident input bits with the output bits. The mathematical expression of $%
\mathcal{L(}S\mathcal{)}$ is as follows%
\begin{equation*}
{\small \mathcal{L(}S\mathcal{)}{\normalsize =}\frac{1}{2^{8}}\left\{
{\normalsize \max_{\alpha ,\beta }}\left\{ \left\vert {\normalsize \#}%
\left\{ {\normalsize x\in GF(2^{8}):\alpha \cdot x=\beta \cdot S(x)}\right\}
{\normalsize -2^{7}}\right\vert \right\} \right\} ,}
\end{equation*}%
where $\alpha \in GF(2^{8})$ and\ $\beta \in GF(2^{8})\backslash \{0\}$.

An S-box $S$ is said to be highly resistive against linear approximation
attacks if it has low value of $\mathcal{L(}S\mathcal{)}$. The LAP of the
newly generated S-boxes is listed in Table \ref{LAP}. The average LAP of all
of the listed S-boxes is $0.1371$ which is very low, and hence the proposed
scheme is capable of generating S-boxes with high resistance against
linear approximation attacks.
\begin{table}[]
\caption{LAP of the newly generated S-boxes}
\label{LAP}{\small \centering
\scalebox{0.70} {\begin{tabular}{llllllllllllllll}
\hline \\[-3mm]
S-boxes	&	$S_{1667,351}^{N}$	&	$S_{1949,544}^{N}$	&	$S_{3023,626}^{N}$	&	$S_{3299,1451}^{D}$	&	$S_{3041,1298}^{D}$	&	$S_{3347,2937}^{D}$	&	$S_{4229,2422}^{M}$	&	$S_{4217,1156}^{M}$	&	$S_{3299,1400}^{M}$	\\
\hline																			
LAP	&	0.1328	&	0.1328	&	0.1406	&	0.1484	&	0.1328	&	0.1406	&	0.1328	&	0.1328	&	0.1406	\\
\hline
\end{tabular}} }
\end{table}

\subsubsection{Differential Approximation Probability (DAP)}

The strength of an S-box against differential approximation attacks is
measured by calculating its DAP. For an S-box $S$, the DAP $\mathcal{D}(S)$
is the maximum probability of a specific change ${\normalsize \triangle y}$
in the output bits ${\normalsize S(x)}$ when the input bits ${\normalsize x}$
are changed to ${\normalsize x\oplus \triangle x}$ i.e.,
\begin{equation*}
\mathcal{D}(S){\normalsize =}\frac{1}{2^{8}}\left\{ {\normalsize %
\max_{\triangle x,\triangle y}}\left\{ {\normalsize \#}\left\{ {\normalsize %
x\in GF(2}^{8}{\normalsize ):S(x\oplus \triangle x)=S(x)\oplus \triangle y}%
\right\} \right\} \right\} {\normalsize ,}
\end{equation*}%
where $\triangle x,$\ $\triangle y\in GF(2^{8}),$ and \textquotedblleft $%
{\normalsize \oplus }$\textquotedblright\ is bit-wise addition over $GF(2)$.

The smaller is the value of DAP, the higher is the security of the S-box against
differential approximation attacks. The experimental results of DAP on the
newly generated S-boxes are presented in Table \ref{dap}. It is evident from
Table \ref{dap} that the newly generated S-boxes have high resistance
against differential attacks.
\begin{table}[H]
\caption{DAP of the newly generated S-boxes}
\label{dap}{\small \centering
\scalebox{0.70} {\begin{tabular}{llllllllllllllll}
\hline \\[-3mm]
S-boxes	&	$S_{1667,351}^{N}$	&	$S_{1949,544}^{N}$	&	$S_{3023,626}^{N}$	&	$S_{3299,1451}^{D}$	&	$S_{3041,1298}^{D}$	&	$S_{3347,2937}^{D}$	&	$S_{4229,2422}^{M}$	&	$S_{4217,1156}^{M}$	&	$S_{3299,1400}^{M}$	\\
\hline																			
DAP	&	0.0391	&	0.0391	&	0.0391	&	0.0391	&	0.0391	&	0.0391	&	0.0391	&	0.0391	&	0.0391	\\
\hline
\end{tabular}}  }
\end{table}
\subsection{Strict Avalanche Criterion (SAC)}
The diffusion creation capability of an S-box is calculated by SAC. The SAC
of an S-box $S$ is the measure of change in output bits when a single input
bit is changed. The SAC of an S-box $S$ with boolean functions $S_{i},$ where $1\leq i\leq 8$, is computed by calculating an eight dimensional square matrix $M(S)=[m_{ij}]$ by using each of the eight elements $%
{\normalsize \alpha }_{j}{\normalsize \in GF(2}^{8}{\normalsize )}$ with
only one non-zero bit as%
\begin{equation*}
{\small m_{ij}=\frac{1}{2^{8}}\left( \sum_{x\in GF(2^{8})}{\normalsize w}%
\left( {\normalsize S}_{i}{\normalsize (x\oplus \alpha }_{j}{\normalsize %
)\oplus S}_{i}{\normalsize (x)}\right) \right) {\normalsize ,}}
\end{equation*}%
where $w(v)$\ denotes the number of non-zero bits in the vector $v$.

SAC test is fulfilled, if all entries of $M(S)$ are close to $0.5$. The entries
of SAC matrix corresponding to each newly generated S-boxes ${\small S}%
_{1667,351}^{N}$, ${\small S}_{3299,1451}^{D}$ and ${\small S}%
_{4229,2422}^{M} $ are plotted in a linear order in Figure 2. The average of
minimum, and maximum values of $M(S)$ corresponding to each of the newly
generated S-boxes are $0.4115$ and $0.6094$, respectively. Table \ref{sac}
clearly shows that the S-boxes generated by the proposed method based on a
MEC is capable of generating high diffusion in the data. 

\begin{table}[H]
\caption{SAC of the newly generated S-boxes}
\label{sac}{\small \centering
\scalebox{0.70} {\begin{tabular}{llllllllllllllll}
\hline \\[-3mm]
S-boxes	&	$S_{1667,351}^{N}$	&	$S_{1949,544}^{N}$	&	$S_{3023,626}^{N}$	&	$S_{3299,1451}^{D}$	&	$S_{3041,1298}^{D}$	&	$S_{3347,2937}^{D}$	&	$S_{4229,2422}^{M}$	&	$S_{4217,1156}^{M}$	&	$S_{3299,1400}^{M}$	\\
\hline																			
SAC(max)	&	0.5938	&	0.625	&	0.6563	&	0.6406	&	0.6094	&	0.6094	&	0.5938	&	0.6094	&	0.625	\\
SAC(min)	&	0.4531	&	0.4219	&	0.4219	&	0.4063	&	0.4219	&	0.4063	&	0.375	&	0.3906	&	0.3594	\\
\hline
\end{tabular}}  }
\end{table}

\begin{figure}[H]
\label{fig:sac}\centering
\includegraphics [scale=0.80]{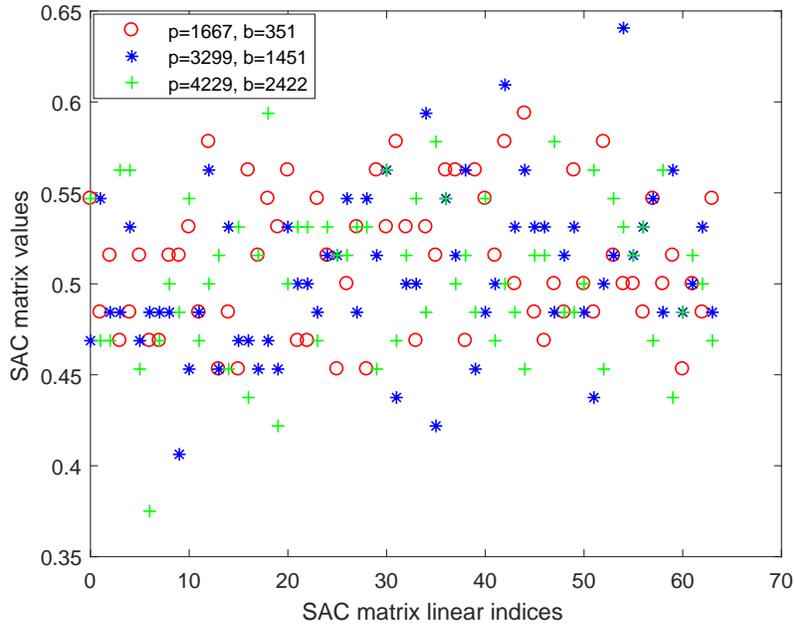}
\caption{SAC matrix plot for ${\protect\small S}_{1667,351}^{N}$, $%
{\protect\small S}_{3299,1451}^{D}$ and ${\protect\small S}_{4229,2422}^{M}$}
\end{figure}

\subsection{Bit Independence Criterion (BIC)}

BIC is also an important test to measure the diffusion creation strength of
an S-box. The main idea of this test is to investigate the dependence of a
pair of output bits when an input bit is reversed. The BIC of an S-box $S$
over $GF(2^{8})$\ with $S_{i}$ boolean functions is also calculated by
computing a square matrix $N(S)=[n_{ij}]$ of dimension eight as follows
\begin{equation*}
{\small n_{ij}=\frac{1}{2^{8}}\left( \sum_{\substack{ x\in GF(2^{8})  \\ %
1\leq k\leq 8}}{\normalsize w}\left( {\normalsize S}_{i}{\normalsize %
(x\oplus \alpha }_{j}{\normalsize )\oplus S}_{i}{\normalsize (x)\oplus S}_{k}%
{\normalsize (x+\alpha }_{j}{\normalsize )\oplus S}_{k}{\normalsize (x)}%
\right) \right) .}
\end{equation*}

Of course $n_{ii}=0$. An S-box is said to be good if all off-diagonal
values of its BIC matrix are near to $0.5$. The experimental results of this
test on the newly generated S-boxes ${\small S}_{1667,351}^{N}$,${\small S}%
_{3299,1451}^{D}$ and ${\small S}_{4229,2422}^{M},$ excluding the value $0,$
are shown in a linear order in Figure 3. The minimum, and maximum values of
BIC matrix $N(S)$ of each of the newly generated S-boxes are listed in Table %
\ref{Sbox}. It is evident from Figure 3 and Table \ref{Sbox} that the
S-boxes generated by the proposed methods are strong enough to generate high
diffusion in the data.
\begin{table}[H]
\caption{BIC of the newly generated S-boxes}
\label{Sbox}{\small \centering
\scalebox{0.70} {\begin{tabular}{llllllllllllllll}
\hline \\[-3mm]
S-boxes	&	$S_{1667,351}^{N}$	&	$S_{1949,544}^{N}$	&	$S_{3023,626}^{N}$	&	$S_{3299,1451}^{D}$	&	$S_{3041,1298}^{D}$	&	$S_{3347,2937}^{D}$	&	$S_{4229,2422}^{M}$	&	$S_{4217,1156}^{M}$	&	$S_{3299,1400}^{M}$	\\
\hline																			
BIC(max)	&	0.5273	&	0.5293	&	0.5313	&	0.5371	&	0.5273	&	0.5254	&	0.5254	&	0.5313	&	0.5449	\\
BIC(min)	&	0.4648	&	0.4629	&	0.4707	&	0.4707	&	0.4844	&	0.4746	&	0.4688	&	0.4766	&	0.4727	\\
 \hline
\end{tabular}}  }
\end{table}

\begin{figure}[H]
\label{bic}\centering
\includegraphics [scale=0.80]{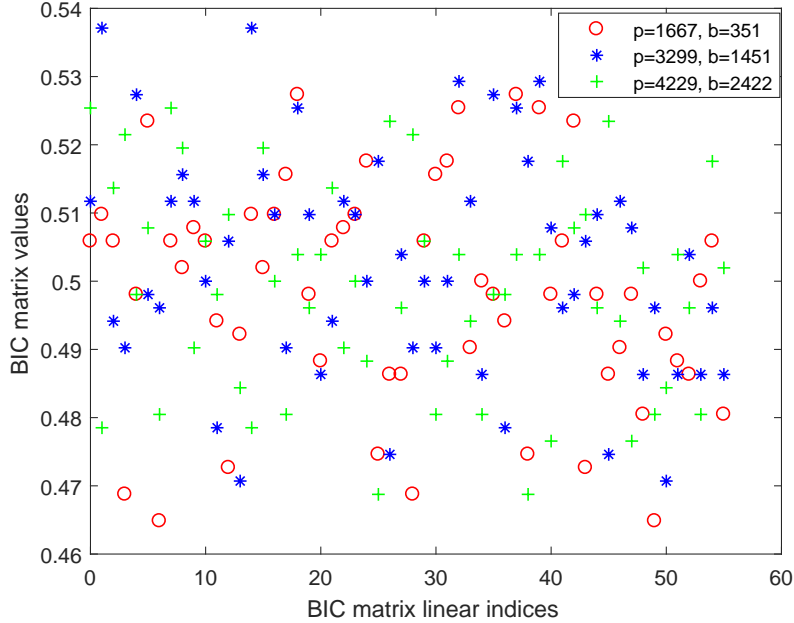}
\caption{BIC matrix plot for ${\protect\small S}_{1667,351}^{N}$,$%
{\protect\small S}_{3299,1451}^{D}$ and ${\protect\small S}_{4229,2422}^{M}$}
\end{figure}

\subsection{Algebraic Complexity(AC)}

The resistance of an S-box against algebraic attacks is measured by
computing its linear polynomial. The AC of an S-box is the number of
non-zero terms in its linear polynomial. The greater is the AC, the greater
is the security of the S-box against algebraic attacks. The AC of the newly generated S-boxes is computed, and is presented in Table %
\ref{ac}. The minimum, and maximum values of AC of the newly generated
S-boxes are $253,$ and $255$, respectively, which are very close to the
optimal value $255$. Thus, the proposed method is able to generate S-boxes
with good AC based on a MEC.

{\small 
\begin{table}[H]
\caption{The AC of the newly generated S-boxes}
\label{ac}{\small \centering
\scalebox{0.70} {\begin{tabular}{llllllllllllllll}
\hline \\[-3mm]
S-boxes	&	$S_{1667,351}^{N}$	&	$S_{1949,544}^{N}$	&	$S_{3023,626}^{N}$	&	$S_{3299,1451}^{D}$	&	$S_{3041,1298}^{D}$	&	$S_{3347,2937}^{D}$	&	$S_{4229,2422}^{M}$	&	$S_{4217,1156}^{M}$	&	$S_{3299,1400}^{M}$	\\
\hline																			
AC	&	254	&	254	&	255	&	255	&	254	&	255	&	253	&	253	&	255	\\
 \hline
\end{tabular}}  }
\end{table}
%
}
\section{Comparison and Discussion}

A detailed comparison of the proposed S-box construction method is performed in this section.

\subsection{Time and Space Complexity}
It is always desirable to have algorithms with low time and space complexity from implementation point of view.
The time and space complexity of the proposed method and other S-box generation methods \cite{Hayat, umar2} based on ECs are compared in Table~\ref{TimeComplexity}.
Note that each method in ~\cite{Hayat, umar2} has quadratic time complexity, while the proposed method takes linear time in the underlying prime $p$ for the generation of an S-box.
However, the space complexity of the methods in ~\cite{Hayat, umar2} is $\mathcal{O}(p)$, where  $p$ is the underlying prime, while it is constant for the proposed method.
Hence, the newly developed method is more suitable for the implementation when compared to all existing S-box generation methods over EC.
\begin{table}[h!]
\caption{Comparison of time and space complexity of the proposed method with other methods over ECs}
\label{TimeComplexity}\centering
{\
\scalebox{0.80} {\begin{tabular}{cccc}
\hline
S-box & Ref. \cite{Hayat} &Ref.  \cite{umar2} & {\small Proposed method} \\
\hline
Time complexity & $\mathcal{O}(p^{2})$ & $\mathcal{O}(p^{2})$ & $\mathcal{O}(p)$  \\
\hline

Space complexity & $\mathcal{O}(p)$ & $\mathcal{O}(p)$ & $\mathcal{O}(1)$  \\ \hline

\end{tabular}%
}}
\end{table}
\subsection{Generation Efficiency}
For a good dynamic S-box construction scheme, it is necessary to ensure the generation of S-box for each valid input parameters, and construct enough number of distinct S-boxes.
It is evident from Theorem~\ref{mrdell} that the proposed method always generate an S-box for each input, while the output of the methods in~\cite{Hayat,umar2} are uncertain i.e., they do not guarantee the construction of S-boxes for each input. This implies that the proposed method is better than the other existing schemes over ECs.

The proposed method can generate at most $p -1$ number of distinct S-boxes for a given prime $p$ and ordering, since for each $b \in [1, p-1] $ it can generate exactly one S-box.
We generated all S-boxes by the proposed method for different primes $p = $ 257, 263, 269, 281, 293, 1013, 1019, 1031, 1049, 1061 and 1997
and each ordering developed in this paper. The number of distinct S-boxes for each ordering is same for all the primes and is listed in listed in Table~\ref{Dsb}.
It is evident from Table~\ref{Dsb} that the number of distinct S-boxes generated by proposed S-box design scheme attains the optimal value and increases with the increase in the size of the prime. Hence, one can generate the desired number of distinct S-boxes by using the proposed method on a appropriate prime.
%

\begin{table}[H]
\caption{The number of distinct  S-boxes constructed by the proposed scheme for some primes}
\label{Dsb}
\scalebox{0.82}{
\begin{tabular}{lcccccccccccc}
\cline{1-12}
$p$           & 257 & 263 & 269          & 281     & 293 &1013&1019&1031&1049, &1061&1997 \\ \hline
 Distinct S-boxes &256    &262     &268     &280      &292& 1012 & 1018 & 1030& 1048 &1060 & 1996 \\ \hline
\end{tabular}
}
\end{table}

\subsection{Cryptographic Properties}
The cryptographic properties of some of the S-boxes constructed by the proposed method are compared with some of the well-known existing S-boxes due to \cite%
{Guoping,Guo,Neural,Shi,Jakimoski,Kim,Hussain1,AES2,Gautam,Chen,Tang} generated by different mathematical structures.
The properties of the S-boxes used in this comparison are listed in Table \ref
{ct}.
Note that the non-linearity (NL) of the S-boxes ${\small S}%
_{1667,351}^{N}$, ${\small S}_{3299,1451}^{D}$ and ${\small S}%
_{4229,2422}^{M}$ is greater than that of the S-boxes in \cite%
{Guoping,Guo,Hayat,Kim,Gautam,Chen,Tang}, and hence the newly generated S-boxes create
better confusion in the data when compared to the later S-boxes.
This implies that the proposed technique is capable of generating S-boxes with good NL when compared to some of the other existing techniques. Moreover, the linear
approximation probability (LAP) of the newly generated S-boxes is better
than the LAP of the S-boxes in \cite{Guoping,Guo,Neural,Gautam,Chen,Tang}, while their differential
approximation probability (DAP) is at most the DAP of the S-boxes in \cite%
{Guoping,Guo,Neural,Hayat,Kim,Gautam,Chen,Tang}.
Thus, the S-boxes generated by the proposed
technique have same or better security against approximation attacks as
compared to the other S-boxes. Similarly, the SAC, BIC and AC test results
of the newly generated S-boxes are comparable with the S-boxes listed in
Table \ref{ct}.
Hence, the proposed S-box generation technique based on a MEC
is capable of generating S-boxes with cryptographic properties comparable with some of the existing S-box construction techniques based on different
mathematical structures.
\begin{table}[H]
\caption{Comparison of the newly generated S-boxes with some of the existing
S-boxes}
\label{ct}{\small \centering
\scalebox{0.87} {\begin{tabular}{lllllllllllllll}
\hline
S-boxes & NL & LAP & DAP & SAC(Max) & SAC(Min) & BIC(Max) & BIC(Min) & AC  \\
\hline

\hline
Ref. \cite{Guoping} & 103 & 0.1328 & 0.0391 & 0.5703 & 0.4414 & 0.5039 & 0.4961 &
255 \\
Ref. \cite{Guo}  & 102 & 0.1484 & 0.0391 & 0.6094 & 0.375 & 0.5215 & 0.4707 & 254
\\
Ref. \cite{Neural} & 106 & 0.1406 & 0.0391 & 0.5938 & 0.4375 & 0.5313 & 0.4648 &
251  \\
Ref. \cite{Hayat} & 104 & 0.0391 & 0.0391 & 0.625 & 0.3906 & 0.53125 & 0.4707 &
255  \\
Ref. \cite{Kim}&  104 & 0.109 & 0.0469 & 0.593 & 0.39 & 0.499 & 0.454 & 255  \\
Ref. \cite{AES2} & 112 & 0.062 & 0.0156 & 0.562 & 0.453 &
0.504 & 0.480 & 9  \\
Ref. \cite{Gautam} & 74 & 0.2109 & 0.0547 & 0.6875 & 0.1094
& 0.5508 & 0.4023 & 253   \\
Ref. \cite{Chen} & 100 & 0.1328 & 0.0547 & 0.6094 & 0.4219
& 0.5313 & 0.4746 & 255   \\
Ref. \cite{Tang}& 103 & 0.1328 & 0.0391 & 0.5703 & 0.3984
& 0.5352 & 0.4727 & 255  \\
$S_{1667,351}^{N}$ &106 & 0.1328 & 0.0391 & 0.5938 & 0.4531 & 0.5273 &
0.4648 & 254   \\
$S_{3299,1451}^{D}$& 106 & 0.1484 & 0.0391 & 0.6406 & 0.4063 & 0.5371 &
0.4707 & 255  \\
$S_{4229,2422}^{M}$& 106 & 0.1328 & 0.0391 & 0.5938 & 0.375 & 0.5254 &
0.4688 & 253   \\
\hline
\end{tabular}}  }
\end{table}

\section{Conclusion}

In this article, we presented an S-box design scheme based on $y$-coordinates of a finite Mordell elliptic curve (MEC), where prime is congruent to $2$ modulo $3$. The technique uses some special type of total
orders on the points of the MEC, and generates an S-box.
The main advantages of the proposed method are that it has linear time complexity, constant space complexity and generate an S-box for each input parameter which are not possible in all existing S-box generation schemes over elliptic curves.
Several standard security tests are
performed on the S-boxes generated by the proposed method to analyze its
cryptographic efficiency.
Experimental results show that the proposed scheme can generate cryptographically strong S-boxes.
Furthermore, it is shown by computational results that the cryptographic properties of the newly generated S-boxes are comparable with some of the well-known existing S-boxes generated by different mathematical structures.

\bibliographystyle{fitee}
\bibliography{OMEC}

\section{Appendix: S-boxes generated by proposed method}
\vspace{-7mm}
\begin{table}[H]
\caption{The S-box ${\protect\small S}_{1667,351}^{N}$ generated by the
proposed method based on the natural ordering}
\label{SN}%
\scalebox{0.5} {\begin{tabular}{llllllllllllllll}
\hline
154 & 217 & 227 & 110 & 85 & 29 & 199 & 37 & 68 & 21 & 91 & 78 & 208 & 3 &
148 & 40 \\
198 & 52 & 54 & 2 & 73 & 7 & 168 & 201 & 229 & 184 & 146 & 6 & 172 & 28 & 44
& 67 \\
195 & 53 & 106 & 10 & 204 & 131 & 157 & 185 & 187 & 156 & 206 & 161 & 81 &
103 & 211 & 33 \\
96 & 159 & 72 & 134 & 164 & 143 & 140 & 193 & 145 & 231 & 237 & 12 & 221 &
188 & 197 & 116 \\
47 & 19 & 129 & 104 & 51 & 236 & 56 & 133 & 55 & 220 & 87 & 1 & 203 & 117 &
210 & 24 \\
4 & 174 & 175 & 113 & 34 & 213 & 171 & 255 & 30 & 43 & 130 & 191 & 57 & 137
& 76 & 234 \\
247 & 244 & 173 & 223 & 63 & 60 & 230 & 166 & 8 & 190 & 139 & 99 & 49 & 200
& 23 & 245 \\
58 & 102 & 226 & 83 & 122 & 70 & 241 & 94 & 127 & 41 & 194 & 233 & 97 & 251
& 107 & 26 \\
109 & 61 & 248 & 90 & 192 & 167 & 147 & 82 & 158 & 225 & 36 & 50 & 84 & 92 &
88 & 38 \\
74 & 136 & 138 & 232 & 62 & 176 & 128 & 189 & 124 & 118 & 169 & 14 & 228 & 0
& 243 & 181 \\
123 & 254 & 20 & 202 & 75 & 149 & 219 & 120 & 160 & 9 & 253 & 39 & 180 & 207
& 114 & 142 \\
183 & 93 & 101 & 15 & 238 & 177 & 132 & 212 & 35 & 250 & 239 & 249 & 179 & 17
& 65 & 186 \\
11 & 125 & 178 & 45 & 170 & 141 & 121 & 126 & 119 & 64 & 144 & 182 & 112 & 22
& 165 & 222 \\
100 & 69 & 252 & 216 & 13 & 27 & 152 & 235 & 80 & 5 & 196 & 59 & 25 & 151 &
79 & 155 \\
240 & 77 & 115 & 71 & 31 & 105 & 95 & 86 & 209 & 150 & 98 & 89 & 163 & 246 &
66 & 18 \\
162 & 214 & 218 & 42 & 242 & 46 & 111 & 48 & 215 & 224 & 135 & 108 & 153 & 32
& 16 & 205 \\ \hline
\end{tabular}}
\end{table}
\vspace{-7mm}
\begin{table}[H]
\caption{The S-box ${\protect\small S}_{3299,1451}^{D}$ generated by the
proposed method based on the diffusion ordering}
\label{SD}%
\scalebox{0.5} {\begin{tabular}{llllllllllllllll}
\hline
33  & 151 & 65  & 207 & 12  & 103 & 96  & 123 & 190 & 126 & 82  & 155 & 21  & 1   & 229 & 186 \\
61  & 224 & 42  & 179 & 63  & 178 & 73  & 153 & 138 & 168 & 146 & 41  & 46  & 9   & 109 & 184 \\
124 & 243 & 236 & 57  & 19  & 6   & 100 & 94  & 69  & 48  & 116 & 216 & 54  & 228 & 90  & 81  \\
47  & 13  & 88  & 197 & 247 & 129 & 206 & 198 & 221 & 5   & 78  & 80  & 150 & 200 & 145 & 55  \\
60  & 105 & 212 & 18  & 210 & 43  & 137 & 250 & 135 & 166 & 52  & 115 & 91  & 208 & 25  & 199 \\
77  & 170 & 121 & 122 & 11  & 254 & 27  & 157 & 175 & 34  & 104 & 201 & 95  & 222 & 133 & 176 \\
36  & 3   & 141 & 218 & 30  & 162 & 220 & 193 & 28  & 110 & 223 & 161 & 74  & 182 & 226 & 113 \\
0   & 112 & 234 & 144 & 241 & 20  & 156 & 62  & 49  & 23  & 26  & 35  & 148 & 101 & 233 & 56  \\
181 & 130 & 118 & 149 & 70  & 173 & 71  & 45  & 50  & 204 & 10  & 87  & 232 & 93  & 177 & 67  \\
4   & 120 & 8   & 40  & 72  & 125 & 92  & 114 & 68  & 83  & 225 & 246 & 158 & 143 & 53  & 196 \\
249 & 242 & 136 & 195 & 160 & 213 & 131 & 107 & 66  & 29  & 230 & 188 & 38  & 111 & 205 & 253 \\
171 & 251 & 102 & 235 & 31  & 127 & 217 & 17  & 183 & 117 & 37  & 211 & 164 & 97  & 119 & 219 \\
167 & 134 & 24  & 16  & 255 & 2   & 32  & 215 & 227 & 154 & 187 & 75  & 231 & 240 & 172 & 142 \\
244 & 89  & 14  & 98  & 76  & 85  & 147 & 79  & 64  & 180 & 214 & 139 & 152 & 238 & 51  & 185 \\
22  & 44  & 194 & 99  & 39  & 169 & 203 & 189 & 108 & 86  & 132 & 237 & 163 & 239 & 209 & 245 \\
59  & 202 & 15  & 58  & 248 & 128 & 174 & 140 & 192 & 191 & 106 & 165 & 159 & 84  & 7   & 252\\ \hline
\end{tabular}}
\end{table}

\begin{table}[H]
\caption{The S-box ${\protect\small S}_{4229,2422}^{M}$ generated by using
the proposed method based on the modulo diffusion ordering}
\label{SM}%
\scalebox{0.5} {\begin{tabular}{llllllllllllllll}
\hline
15  & 13  & 247 & 249 & 167 & 183 & 179 & 173 & 101 & 204 & 105 & 210 & 214 & 205 & 199 & 19  \\
164 & 38  & 85  & 72  & 98  & 90  & 113 & 12  & 239 & 217 & 165 & 228 & 123 & 195 & 26  & 216 \\
207 & 30  & 182 & 219 & 14  & 215 & 232 & 135 & 241 & 145 & 17  & 244 & 223 & 114 & 29  & 70  \\
104 & 81  & 71  & 99  & 191 & 128 & 227 & 86  & 172 & 185 & 5   & 75  & 197 & 184 & 109 & 248 \\
162 & 250 & 25  & 110 & 125 & 230 & 129 & 35  & 102 & 234 & 54  & 171 & 194 & 16  & 33  & 73  \\
155 & 246 & 154 & 84  & 149 & 134 & 238 & 18  & 240 & 67  & 200 & 253 & 61  & 31  & 170 & 180 \\
55  & 20  & 224 & 187 & 10  & 147 & 92  & 133 & 196 & 242 & 146 & 27  & 34  & 140 & 28  & 192 \\
63  & 127 & 143 & 203 & 137 & 2   & 74  & 193 & 65  & 4   & 124 & 51  & 107 & 24  & 42  & 122 \\
103 & 22  & 41  & 226 & 235 & 252 & 116 & 212 & 77  & 49  & 48  & 201 & 148 & 221 & 251 & 80  \\
229 & 115 & 93  & 139 & 181 & 52  & 97  & 119 & 189 & 166 & 21  & 45  & 53  & 100 & 32  & 131 \\
112 & 94  & 59  & 142 & 117 & 36  & 153 & 254 & 66  & 158 & 79  & 121 & 8   & 130 & 132 & 60  \\
245 & 231 & 126 & 152 & 151 & 89  & 0   & 39  & 160 & 136 & 37  & 78  & 236 & 56  & 206 & 157 \\
222 & 174 & 82  & 69  & 6   & 83  & 220 & 3   & 57  & 111 & 208 & 47  & 141 & 87  & 168 & 176 \\
11  & 118 & 169 & 58  & 243 & 120 & 150 & 91  & 190 & 23  & 178 & 44  & 7   & 43  & 177 & 76  \\
161 & 144 & 163 & 68  & 88  & 138 & 218 & 108 & 159 & 186 & 40  & 237 & 175 & 46  & 198 & 96  \\
202 & 9   & 62  & 50  & 64  & 233 & 255 & 209 & 188 & 1   & 106 & 225 & 95  & 213 & 156 & 211\\ \hline
\end{tabular}}
\end{table}

\end{document}